%% Filename: howto9.tex
%% Authors:  Eugen Paal and Jri Virkepu
%% Title:        How to compose Lagrangian
%% Date:      September 5, 2006
%%%%%%%%%%%%%%%%%%%%%%%%%%%%
\documentclass[11pt]{article}
\usepackage{amsmath,amsthm,amsfonts,amssymb,comment}
\textheight=24 cm
\textwidth=15  cm
\oddsidemargin=5mm
\evensidemargin=5mm
\voffset-2cm

\renewcommand{\=}{\doteq}
\newcommand{\p}{\partial}

\newcommand{\RR}{\mathbb{R}}
\newcommand{\X}{\mathcal{X}}
\newcommand{\al}{\alpha}

\newcommand{\apr}{\approx}

\newtheorem{thm}{Theorem}[section]
\theoremstyle{definition}
 \newtheorem{defn}[thm]{Definition}
\theoremstyle{definition}

\theoremstyle{definition}
 \newtheorem{rem}[thm]{Remark}
\theoremstyle{definition}

\numberwithin{equation}{section}

\begin{document}

\title{\huge\bf How to compose Lagrangian?}
\author{\Large Eugen Paal and J\"uri Virkepu
\\  \\
Tallinn University of Technology
\\
Ehitajate tee 5, 19086 Tallinn, Estonia
\\  \\
E-mails: eugen.paal@ttu.ee and  jvirkepu@staff.ttu.ee
 }
\date{}
\maketitle
\thispagestyle{empty}

\begin{abstract}

A method for constructing Lagrangians for the Lie transformation groups is explained.
As examples, the Lagrangians for real plane rotations and affine transformations of the real line are constructed.
\par\smallskip
{\bf 2000 MSC:} 22E70, 70H45
\end{abstract}

\section{Introduction and outline of the paper}

It is a well-known problem in physics and mechanics how to construct
Lagrangians for mechanical systems via their equations of motion.
This \emph{inverse variational problem} has been investigated for some types of
equations of motion in \cite{l99}.

In \cite{bpv06, pv05}, the plane rotation group $SO(2)$ was
considered as a toy model of the Hamilton-Dirac mechanics with
constraints. By introducing a Lagrangian in a particular form,
canonical formalism for $SO(2)$ was developed. The crucial idea of
this approach is that the Euler-Lagrange and Hamilton canonical
equations must in a sense coincide with Lie equations of the Lie
transformation group.

In this paper, the method for constructing such a Lagrangian is
proposed. It is shown, how it is possible to find the Lagrangian,
based on the Lie equations of the Lie transformation group.

By composing a Lagrangian,  it is possible to describe given Lie transformation
group as a mechanical system and to develop the corresponding Lagrange and Hamilton formalisms.

\section{General method for constructing Lagrangians}

Let $G$ be an $r$-parametric Lie group with unit $e\in G$ and let
$g^{i}$ ($i=1,\dots,r$) denote the  local coordinates of an element
$g\in G$ from the vicinity of $e$. Let  $\X$ be an $n$-dimensional
manifold and denote the local coordinates of $X\in\X$ by $X^{\al}$
($\al=1,\dots,n$). Consider a (left) differentiable action of $G$ on
$\X$ given by
\[
\X\ni \quad X'=S_g X\quad\in \X
\]
Let $gh$ denote the \emph{multiplication} of $G$. Then
\[
S_g S_h = S_{gh},\qquad g,h\in G
\]
By introducing the \emph{auxiliary functions} $u^i_j$ and $S^\al_j$
by
\begin{align*}
(gh)^i&\=h^i+u^i_j(h)g^j+\dots\\
(S_gX)^\al&\=X^\al+S^\al_j(X)g^j+\dots
\end{align*}
the Lie equations read
\[
\varphi^{\al}_{j}(X;g)\=u^s_j(g)\dfrac{\p (S_g X)^\al}{\p
g^s}-S^{\al}_j(S_gX)=0
\]
The expressions $\varphi^{\al}_{j}$ are said to be \emph{constraints} for the 
Lie transformation group ($\X,G$). Then we search for such a vector Lagrangian
$\mathbf{L}\doteq(L_{1},\ldots,L_{r})$ with components
\[
L_{k}
\doteq
\sum_{\al=1}^{n}\sum_{s=1}^{r}\lambda_{k\al}^{s}\varphi_s^{\al},
\quad k=1,2,\ldots, r
\]
and such \emph{Lagrange multipliers} $\lambda_{k\al}^{s}$ that
the Euler-Lagrange equations in a sense coincide with the Lie
equations.

The notion of a vector Lagrangian was introduced and developed in \cite{f01,s86}.

\begin{defn}[weak equality]
The functions  $A$ and $B$ are called \emph{weakly equal}, if
\[
(A-B)\Big|_{\varphi_j^{\al}=0}=0
\quad
\forall j=1,2,\ldots,r,
\quad
\forall \al =1,2,\ldots,n
\]
In this case we write $A\apr B$.
\end{defn}

By denoting
\[
X'^{\al}_{i}\doteq\frac{\p X'^{\al}}{\p g^{i}}
\]
the conditions for the Lagrange multipliers read as the  
\emph{weak Euler-Lagrange equations}
\[
L_{k\al}\doteq\frac{\p L_{k}}{\p X'^{\al}}
-\sum_{i=1}^{r}\frac{\p}{\p g^{i}}\frac{\p L_{k}}{\p
X'^{\al}_i}\approx 0
\]
Finally, one must check by direct calculations that the Euler-Lagrange equations 
$L_{k\alpha}=0$ imply the Lie equations of the Lie transformation group.

\section{Lagrangian for $SO(2)$}

First consider the 1-parameter Lie transformation group $SO(2)$,
\emph{the rotation group of the real two-plane} $\RR^{2}$. In this
case $n=2$ and $r=1$. Rotation of the plane $\RR^{2}$ by an angle
$g\in\RR$ is given by the transformation
\[
\begin{cases}
(S_gX)^{1}=X'^{1}=X'^{1}(X^{1},X^{2},g)\=X^{1}\cos g-X^{2}\sin g\\
(S_gX)^{2}=X'^{2}=X'^{2}(X^{1},X^{2},g)\=X^{1}\sin g+X^{2}\cos g
\end{cases}
\]
We consider the rotation angle $g$ as a dynamical variable and the
functions $X'^{1}$ and $X'^{2}$ as \emph{field variables} for the
plane rotation group $SO(2)$.

Denote
\[
\dot{X}'^{\al}\doteq\frac{\p X'^{\alpha}}{\p g}
\]

The \emph{infinitesimal coefficients} of the transformation are
\[
\begin{cases}
S^1(X^{1},X^{2})\=\dot{X}'^{1}(X^{1},X^{2},e)=-X^{2}\\
S ^2(X^{1},X^{2})\=\dot{X}'^{2}(X^{1},X^{2},e)=X^{1}
\end{cases}
\]
and the Lie equations read
\[
\begin{cases}
\dot{X}'^{1}=S^1(X'^{1},X'^{2})=-X'^{2}\\
\dot{X}'^{2}=S^2(X'^{1},X'^{2})=X'^{1}
\end{cases}
\]
Rewrite the Lie equations in implicit form as follows:
\[
\begin{cases}
\varphi_1^{1}\doteq\dot{X}'^{1} +X'^{2}=0\\
\varphi_1^{2}\doteq\dot{X}'^{2} -X'^{1}=0
\end{cases}
\]
We search a Lagrangian of $SO(2)$ in the form
\[
L_{1} =\sum_{\al
=1}^{2}\sum_{s=1}^{1}\lambda_{1\al}^{s}\varphi_s^{\al}=\lambda_{11}^{1}\varphi_1^{1}+\lambda_{12}^{1}\varphi_1^{2}
\]
It is more convenient to rewrite it as follows:
\[
L\doteq\lambda_1\varphi^{1}+\lambda_2\varphi^{2}
\]
where the Lagrange multipliers $\lambda_1$ and $\lambda_2$ are to be
found from the weak Euler-Lagrange equations
\[
\dfrac{\p L}{\p X'^{1}}-\dfrac{\p}{\p g}\dfrac{\p L}{\p
\dot{X}'^{1}}\approx 0, \qquad \dfrac{\p L}{\p X'^{2}}-\dfrac{\p}{\p
g}\dfrac{\p L}{\p\dot{X}'^{2}}\approx 0
\]
Calculate
\begin{align*}
\dfrac{\p L}{\p X'^1} &=\dfrac{\p}{\p X'^1}
     \left[\lambda_1(\dot{X}'^{1}+X'^2)+\lambda_2(\dot{X}'^{2}-X'^1)\right]\\&=\dfrac{\p\lambda_1}{\p X'^1}\varphi^{1}+
     \dfrac{\p\lambda_2}{\p X'^1}\varphi^{2}-\lambda_2\approx -\lambda_2\\
\dfrac{\p L}{\p\dot{X}'^{1}} &=\dfrac{\p}{\p \dot{X}'^{1}}
     \left[\lambda_1(\dot{X}'^{1}+X'^2)+\lambda_2(\dot{X}'^{2}-X'^1)\right]\approx\lambda_1\\
\dfrac{\p}{\p g}\dfrac{\p L}{\p\dot{X}'^{1}}&=\dfrac{\p\lambda_1}{\p
g}=\dfrac{\p\lambda_1}{\p X'^1}\dot{X}'^{1} +\dfrac{\p\lambda_1}{\p
X'^2}\dot{X}'^{2}\approx -\dfrac{\p\lambda_1}{\p X'^1}X'^2
+\dfrac{\p\lambda_1}{\p X'^2}X'^1
\end{align*}
from which it follows
\[
\dfrac{\p L}{\p X'^1}-\dfrac{\p}{\p g}\dfrac{\p
L}{\p\dot{X}'^{1}}\approx 0\qquad\Longleftrightarrow\qquad
-\lambda_2+\dfrac{\p\lambda_1}{\p X'^1}X'^2 -\dfrac{\p\lambda_1}{\p
X'^2}X'^1\approx 0
\]
Analogously calculate
\begin{align*}
\dfrac{\p L}{\p X'^2} &=\dfrac{\p}{\p X'^2}
     \left[\lambda_1(\dot{X}'^{1}+X'^2)+\lambda_2(\dot{X}'^{2}-X'^1)\right]\\&=\dfrac{\p\lambda_1}{\p X'^2}\varphi^{1}+
     \dfrac{\p\lambda_2}{\p X'^2}\varphi^{2}+\lambda_1\approx \lambda_1\\
\dfrac{\p L}{\p\dot{X}'^{2}} &=\dfrac{\p}{\p \dot{X}'^{2}}
     \left[\lambda_1(\dot{X}'^{1}+X'^2)+\lambda_2(\dot{X}'^{2}-X'^1)\right]\approx\lambda_2\\
\dfrac{\p}{\p g}\dfrac{\p L}{\p\dot{X}'^{2}}&=\dfrac{\p\lambda_2}{\p
g}=\dfrac{\p\lambda_2}{\p X'^1}\dot{X}'^{1} +\dfrac{\p\lambda_2}{\p
X'^2}\dot{X}'^{2}\approx -\dfrac{\p\lambda_2}{\p X'^1}X'^2
+\dfrac{\p\lambda_2}{\p X'^2}X'^1
\end{align*}
from which it follows
\[
\dfrac{\p L}{\p X'^2}-\dfrac{\p}{\p g}\dfrac{\p
L}{\p\dot{X}'^{2}}\approx 0\qquad\Longleftrightarrow\qquad
\lambda_1+\dfrac{\p\lambda_2}{\p X'^1}X'^2 -\dfrac{\p\lambda_2}{\p
X'^2}X'^1\approx 0
\]
So the calculations imply the following system of differential
equations for the Lagrange multipliers:
\[
\begin{cases}
-\dfrac{\p\lambda_1}{\p
X'^1}X'^2 +\dfrac{\p\lambda_1}{\p X'^2}X'^1\approx -\lambda_2\\
-\dfrac{\p\lambda_2}{\p X'^1}X'^2 +\dfrac{\p\lambda_2}{\p
X'^2}X'^1\approx \lambda_1
\end{cases}
\]
We are not searching for the general solution for this system of
partial differential equations, but the Lagrange multipliers are
supposed to be a linear combination of the field variables $X'^1$
and $X'^2$,
\[
\begin{cases}
\lambda_1\doteq\alpha_1X'^1+\alpha_2X'^2\\
\lambda_2\doteq\beta_1X'^1+\beta_2X'^2, \qquad
\alpha_1,\alpha_2,\beta_1,\beta_2\in\mathbb{R}
\end{cases}
\]
By using these expressions, one has
\[
\begin{cases}
-\alpha_1X'^2+\alpha_2X'^1\approx -\beta_1X'^1-\beta_2X'^2\\
-\beta_1X'^2+\beta_2X'^1\approx \alpha_1X'^1+\alpha_2X'^2
\end{cases}
\quad\Longleftrightarrow\quad
\begin{cases}
(\alpha_2+\beta_1)X'^1+(\beta_2-\alpha_1)X'^2\approx 0\\
(\beta_2-\alpha_1)X'^1-(\alpha_2+\beta_1)X'^2\approx 0
\end{cases}
\]
This is a homogeneous system of two linear equations of four
unknowns $\alpha_1,\alpha_2,\beta_1,\beta_2$. The system is
satisfied, if
\[
\begin{cases}
\alpha_2+\beta_1=0\\
\beta_2-\alpha_1=0
\end{cases}
\qquad\Longleftrightarrow\qquad
\begin{cases}
\beta_1=-\alpha_2\\
\beta_2=\alpha_1
\end{cases}
\]
The parameters $\alpha_1,\alpha_2$ are free. Thus,
\[
\begin{cases}
\lambda_1=\alpha_1X'^1+\alpha_2X'^2\\
\lambda_2=-\alpha_2X'^1+\alpha_1X'^2
\end{cases}
\]
and the desired Lagrangian for $SO(2)$ reads
\begin{equation}
\label{E-L eq1}
L=\alpha_1(X'^1\dot{X}'^{1}
+X'^2\dot{X}'^{2})+\alpha_2\left[X'^2\dot{X}'^{1}
+(X'^2)^{2}-X'^1\dot{X}'^{2} +(X'^1)^{2}\right]
\end{equation}
with free real parameters $\alpha_1,\alpha_2$.
Thus we can propose the

\begin{thm}
The Euler-Lagrange equations for the Lagrangian (\ref{E-L eq1})  coincide with 
the  Lie equations of $SO(2)$.
\end{thm}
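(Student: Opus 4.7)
The plan is to verify the theorem by a direct, non-weak computation: evaluate the two Euler-Lagrange expressions $L_{1\al}$ for the Lagrangian (\ref{E-L eq1}) and check that the resulting system reduces algebraically to $\varphi_1^{1}=0$ and $\varphi_1^{2}=0$, i.e.\ to the Lie equations of $SO(2)$.

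First I would split $L=\al_1 L^{(1)}+\al_2 L^{(2)}$ with
\[
L^{(1)}\doteq X'^{1}\dot{X}'^{1}+X'^{2}\dot{X}'^{2},\qquad L^{(2)}\doteq X'^{2}\dot{X}'^{1}-X'^{1}\dot{X}'^{2}+(X'^{1})^{2}+(X'^{2})^{2}
\]
and observe that $L^{(1)}=\tfrac{1}{2}\tfrac{d}{dg}\bigl((X'^{1})^{2}+(X'^{2})^{2}\bigr)$ is a total $g$-derivative, hence a null Lagrangian whose Euler-Lagrange expressions vanish identically. This explains in advance why only the $\al_2$-part will contribute to the equations of motion, and it is the same reason $\al_1$ dropped out silently during the derivation of $L$.

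Next I would compute the four first partial derivatives of $L^{(2)}$ (each linear in the variables), and then apply $\p/\p g$ to the two momentum-like terms using $\p X'^{\al}/\p g=\dot{X}'^{\al}$. Collecting produces
\[
L_{11}=-2\al_2\bigl(\dot{X}'^{2}-X'^{1}\bigr)=-2\al_2\varphi_1^{2},\qquad L_{12}=2\al_2\bigl(\dot{X}'^{1}+X'^{2}\bigr)=2\al_2\varphi_1^{1}
\]
Provided $\al_2\neq 0$, the Euler-Lagrange system $L_{1\al}=0$ is therefore equivalent to $\varphi_1^{1}=\varphi_1^{2}=0$, which are precisely the Lie equations of $SO(2)$.

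There is no serious obstacle in the argument, as all calculations are linear and short. The only point that deserves explicit comment is the interpretation of \emph{coincide}: the parameter $\al_1$ is genuinely free because $L^{(1)}$ is a null Lagrangian, so the theorem holds for every $\al_1\in\RR$ and every nonzero $\al_2\in\RR$. In the degenerate case $\al_2=0$, the Lagrangian is a pure total derivative and its Euler-Lagrange equations are vacuous; this possibility should be excluded, either tacitly in the statement or by fixing a normalisation such as $\al_2=1$.
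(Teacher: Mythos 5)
Your proposal is correct, and every identity in it checks out: for $L^{(2)}=X'^{2}\dot{X}'^{1}-X'^{1}\dot{X}'^{2}+(X'^{1})^{2}+(X'^{2})^{2}$ one indeed gets $L_{11}=-2\al_2(\dot{X}'^{2}-X'^{1})$ and $L_{12}=2\al_2(\dot{X}'^{1}+X'^{2})$, which are exactly the paper's final equations $2\alpha_2X'^{1}-2\alpha_2\dot{X}'^{2}=0$ and $2\alpha_2\dot{X}'^{1}+2\alpha_2X'^{2}=0$. The route differs from the paper's in organization: the paper differentiates the full two-parameter Lagrangian and lets the $\alpha_1$-terms cancel silently at the last subtraction (e.g.\ $\alpha_1\dot{X}'^{1}$ occurs both in $\p L/\p X'^{1}$ and in $\frac{\p}{\p g}\frac{\p L}{\p\dot{X}'^{1}}=\alpha_1\dot{X}'^{1}+\alpha_2\dot{X}'^{2}$), whereas you isolate the reason beforehand by splitting off $L^{(1)}=\tfrac{1}{2}\frac{d}{dg}\bigl[(X'^{1})^{2}+(X'^{2})^{2}\bigr]$ as a null Lagrangian, so that only the $\al_2$-part can contribute. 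What your version buys is twofold: it gives a structural explanation for the freedom of $\alpha_1$ (rather than an after-the-fact cancellation), and it exposes a gap that the paper's own proof glosses over --- the paper's equivalence $2\alpha_2X'^{1}-2\alpha_2\dot{X}'^{2}=0\iff\dot{X}'^{2}=X'^{1}$ tacitly divides by $\alpha_2$, and for $\alpha_2=0$ the Lagrangian degenerates to a total derivative with vacuous Euler--Lagrange equations, so the theorem really requires $\alpha_2\neq0$. Your explicit handling of that degenerate case makes your argument slightly sharper than the published proof.
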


\begin{proof}
Calculate
\begin{align*}
\dfrac{\p L}{\p X'^1} &=\dfrac{\p}{\p X'^1}
     \left[\alpha_1(X'^1\dot{X}'^{1} +X'^2\dot{X}'^{2})+
     \alpha_2\left(X'^2\dot{X}'^{1} +(X'^2)^{2}-X'^1\dot{X}'^{2}
     +(X'^1)^{2}\right)\right]\\
     &=\alpha_1\dot{X}'^{1}-\alpha_2\dot{X}'^{2} +2\alpha_2X'^1\\
\dfrac{\p L}{\p\dot{X}'^{1}} &=\dfrac{\p}{\p\dot{X}'^{1}}
\left[\alpha_1(X'^1\dot{X}'^{1} +X'^2\dot{X}'^{2})+
     \alpha_2\left(X'^2\dot{X}'^{1} +(X'^2)^{2}-X'^1\dot{X}'^{2}
     +(X'^1)^{2}\right)\right]\\&=\alpha_1X'^1+\alpha_2X'^2
\quad\Longrightarrow\quad
\dfrac{\p}{\p g}\dfrac{\p L}{\p\dot{X}'^{1}}=\alpha_1\dot{X}'^{1}+\alpha_2\dot{X}'^{2}
\end{align*}
from which it follows
\[
\dfrac{\p L}{\p X'^1}-\dfrac{\p}{\p g}\dfrac{\p L}{\p\dot{X}'^{1}}=0
\quad\Longleftrightarrow\quad 2\alpha_2X'^1-2\alpha_2\dot{X}'^{2}=0
\quad\Longleftrightarrow\quad \dot{X}'^{2}=X'^1
\]
Analogously calculate
\begin{align*}
\dfrac{\p L}{\p X'^2} &=\dfrac{\p}{\p X'^2}
     \left[\alpha_1(X'^1\dot{X}'^{1} +X'^2\dot{X}'^{2})+
     \alpha_2\left(X'^2\dot{X}'^{1} +(X'^2)^{2}-X'^1\dot{X}'^{2}
     +(X'^1)^{2}\right)\right]\\
     &=\alpha_2\dot{X}'^{1}+2\alpha_2X'^2+\alpha_1\dot{X}'^2\\
\dfrac{\p L}{\p\dot{X}'^{2}} &=\dfrac{\p}{\p\dot{X}'^{2}}
\left[\alpha_1(X'^1\dot{X}'^{1} +X'^2\dot{X}'^{2})+
     \alpha_2\left(X'^2\dot{X}'^{1} +(X'^2)^{2}-X'^1\dot{X}'^{2}
     +(X'^1)^{2}\right)\right]\\&=-\alpha_2X'^1+\alpha_1X'^2
\quad\Longrightarrow\quad
\dfrac{\p}{\p g}\dfrac{\p L}{\p\dot{X}'^{2}}=-\alpha_2\dot{X}'^{1}+\alpha_1\dot{X}'^{2}
\end{align*}
from which it follows
\[
\dfrac{\p L}{\p X'^2}-\dfrac{\p}{\p g}\dfrac{\p L}{\p\dot{X}'^{2}}=0
\quad\Longleftrightarrow\quad 2\alpha_2\dot{X}'^{1}+2\alpha_2X'^2=0
\quad\Longleftrightarrow\quad \dot{X}'^{1}=-X'^2
\qedhere
\]

\end{proof}

\section{Physical interpretation}

While the Lagrangian $L$ of $SO(2)$ contains two free parameters
$\alpha_1,\alpha_2,$ particular forms of it can be found taking
into account physical considerations.
In particular, if $\alpha_1=0$ and $\alpha_2=-1/2$, then the
Lagrangian of $SO(2)$ reads
\[
L(X'^1,X'^2,\dot{X}'^{1},\dot{X}'^{2})
\=\dfrac{1}{2}(X'^1\dot{X}'^{2}-\dot{X}'^{1}X'^2)
-\dfrac{1}{2}\left[(X'^1)^{2}+(X'^2)^{2}\right]
\]
By using the Lie equations one can easily check that
\[
X'^1\dot{X}'^{2}-\dot{X}'^{1}
X'^2=(\dot{X}'^{1})^{2}+(\dot{X}'^{2})^{2}
\]
The function
\[
T\doteq
\dfrac{1}{2}\left[(\dot{X}'^{1})^{2}+(\dot{X}'^{2})^{2}\right]
\]
is the \emph{kinetic energy} of a moving point $(X'^1,X'^2)\in
\RR^{2}$, meanwhile
\[
l\doteq X'^1\dot{X}'^{2}-\dot{X}'^{1} X'^2
\]
is its \emph{kinetic momentum} with respect to origin $(0,0)\in
\RR^{2}$.

This relation has a simple explanation in the kinematics of a rigid
body \cite{g53}.
The kinetic energy of a point can be represented via its kinetic
momentum as follows:
\[
\frac{1}{2}\left[(\dot{X}^{1})^{2}+(\dot{X}'^{2})^{2}\right]
=T=\frac{l}{2}
=\frac{1}{2}\left[X'^1\dot{X}'^{2}-\dot{X}^{1}X'^2\right]
\]
Thus we can conclude, that for the given Lie equations (that is, on the extremals)
of $SO(2)$ the Lagrangian $L$ gives rise to a Lagrangian of a pair of harmonic oscillators.

\section{Lagrangian for the affine transformations of the line}

Now consider the affine transformations of the real line. The latter
may be represented by
\[
\begin{cases}
X'^1=X'^1(X^1,X^2,g^{1},g^2)\doteq g^{1}X^{1}+g^{2}\\
X'^2=X'^2(X^1,X^2,g^1,g^2)\doteq 1,
\qquad\qquad 0\neq g^{1},g^{2}\in\mathbb{R}
\end{cases}
\]
Thus $r=2$ and $n=2$.
Denote
\[
e\doteq(1,0),\quad g^{-1}\doteq \frac{1}{g^{1}}(1,-g^{2})
\]
First, find the multiplication rule
\begin{align*}
(X'')^{1}&\doteq(X'^{1})'=S_{gh}X^{1}=S_g(S_hX^{1})=S_g(h^{1}X^{1}+h^{2})\\&=g^{1}(h^{1}X^{1}+h^{2})+g^{2}
=(g^{1}h^{1})X^{1}+(g^{1}h^{2}+g^{2})
\end{align*}
Calculate the infinitesimal coefficients
\begin{align*}
S^1_1(X^{1},X^{2})&\doteq\left.X'^{1}_1\right|_{g=e}=X_1\\
S^1_2(X^{1},X^{2})&\doteq\left.X'^{1}_2\right|_{g=e}=1\\
S^2_1(X^{1},X^{2})&\doteq\left.X'^{2}_1\right|_{g=e}=0\\
S^2_2(X^{1},X^{2})&\doteq\left.X'^{2}_2\right|_{g=e}=0
\end{align*}
and auxiliary functions
\begin{align*}
u_1^{1}(g)&\doteq\left.\frac{\p (S_{gh}X)^{1}}{\p g^{1}}\right|_{h=g^{-1}}=\frac{1}{g^{1}}\\
u_2^{1}(g)&\doteq\left.\frac{\p (S_{gh}X)^{1}}{\p g^{2}}\right|_{h=g^{-1}}=0\\
u_1^{2}(g)&\doteq\left.\frac{\p (S_{gh}X)^{2}}{\p g^{1}}\right|_{h=g^{-1}}=-\frac{g^{2}}{g^{1}}\\
u_2^{2}(g)&\doteq\left.\frac{\p (S_{gh}X)^{2}}{\p
g^{2}}\right|_{h=g^{-1}}=1
\end{align*}
Next, write Lie equations and find constraints
\[
\left\{
                                          \begin{array}{ll}
                                             X'^{1}_1=\frac{1}{g^{1}}X'^{1}-\frac{g^{2}}{g^{1}} \\
                                             X'^{1}_2=1 \\
                                             X'^{2}_1=0 \\
                                             X'^{2}_2=0
                                           \end{array}
                                         \right.
\quad\Longleftrightarrow\quad \left\{
                                          \begin{array}{ll}
                                             \varphi_1^{1}\doteq X'^{1}_1-\frac{1}{g^{1}}X'^{1}-\frac{g^{2}}{g^{1}} \\
                                            \varphi_2^{1}\doteq X'^{1}_2-1 \\
                                             \varphi_1^{2}\doteq X'^{2}_1\\
                                             \varphi_2^{2}\doteq X'^{2}_2
                                           \end{array}
                                         \right.
\]
We search for a vector Lagrangian $\mathbf{L}=(L_{1},L_{2})$ as
follows:
\begin{align*}
L_{k}&=\sum_{\al
=1}^{2}\sum_{s=1}^{2}\lambda_{k\al}^{s}\varphi_s^{\al} =\lambda_{k
1}^{1}\varphi_1^{1}+\lambda_{k 1}^{2}\varphi_2^{1}+\lambda_{k
2}^{1}\varphi_1^{2}+\lambda_{k 2}^{2}\varphi_2^{2}\\&=\lambda_{k
1}^{1}\left(X'^{1}_1-\frac{1}{g^{1}}X'^{1}-\frac{g^{2}}{g^{1}}\right)
+\lambda_{k 1}^{2}\left(X'^{1}_2-1\right)+\lambda_{k
2}^{1}X'^{2}_1+\lambda_{k 2}^{2}X'^{2}_2,\quad k=1,2
\end{align*}
By substituting the Lagrange multipliers $\lambda_{k\al}^{s}$ into
the weak Euler-Lagrange equations
\[
\frac{\p L_{k}}{\p X'^{\al}}-\sum_{i=1}^{2}\frac{\p}{\p
g^{i}}\frac{\p L_{k}}{\p X'^{\al}_i}\approx 0
\]
we get the following PDE system
\begin{align*}
\begin{cases}
(X'^{1}-g^{2})\frac{\p\lambda_{k 1}^{1}}{\p X'^{1}}+g^{1}\frac{\p\lambda_{k 1}^{2}}{\p X'^{1}}+\lambda_{k 1}^{1}\approx 0 \\
(X'^{1}-g^{2})\frac{\p\lambda_{k 2}^{1}}{\p
X'^{1}}+g^{1}\frac{\p\lambda_{k 2}^{2}}{\p X'^{1}}\approx 0, \quad
k=1,2
\end{cases}
\end{align*}
We find some particular solutions for this system. For example,
\[
k=1:
\begin{cases}
\lambda_{11}^{1}\doteq 0 \\
\lambda_{11}^{2}\doteq \psi_{11}^{2}(X'^2) \\
\lambda_{12}^{1}\doteq \psi_{12}^{1}(X'^2)\\
\lambda_{12}^{2}\doteq \psi_{12}^{2}(X'^2)
\end{cases}
\qquad\text{and}\qquad
k=2:
\begin{cases}
\lambda_{21}^{1}\doteq\psi_{21}^{1}(X'^2) \\
\lambda_{21}^{2}\doteq -\frac{X'^{1}}{g^{1}}\psi_{21}^{1}(X'^2) \\
\lambda_{22}^{1}\doteq 0\\
\lambda_{11}^{2}\doteq 0
\end{cases}
\]
with
$\psi_{21}^1(X'^2),\psi_{11}^2(X'^2),\psi_{12}^1(X'^2),\psi_{12}^2(X'^2)$
as arbitrary real valued functions of $X'^2$.

Thus we can define the Lagrangian $\mathbf{L}=(L_1,L_2)$ with
\begin{align}
\label{E-L eq2}
\begin{cases}
L_1=\psi_{11}^2(X'^2)(X'^{1}_2-1)+\psi_{12}^1(X'^2)X'^{2}_1+\psi_{12}^2(X'^2)X'^{2}_2\\
L_2=\psi_{21}^1(X'^2)\left(X'^{1}_1-\frac{1}{g^{1}}X'^{1}+\frac{g^{2}}{g^{1}}\right)
\end{cases}
\end{align}
and propose the

\begin{thm}
The Euler-Lagrange equations for the vector Lagrangian  $\mathbf{L}=(L_{1},L_{2})$ with
components (\ref{E-L eq2}) 
coincide with the Lie equations of the affine transformations of the real line.
\end{thm}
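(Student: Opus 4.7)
The plan is to mimic the proof given for $SO(2)$: compute the four Euler--Lagrange combinations
$\p L_k/\p X'^\al - \sum_{i=1}^{2}(\p/\p g^i)(\p L_k/\p X'^\al_i)$
for $k,\al\in\{1,2\}$ directly from the explicit expressions in (\ref{E-L eq2}), and then check that the resulting system of four equations is equivalent to the four Lie equations $X'^1_1 = X'^1/g^1 - g^2/g^1$, $X'^1_2 = 1$, $X'^2_1 = 0$, $X'^2_2 = 0$.

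For $L_1$, I would exploit that $L_1$ does not depend on $X'^1$ and that its $X'^2$-dependence enters only through the coefficient functions $\psi_{11}^{2}(X'^2),\psi_{12}^{1}(X'^2),\psi_{12}^{2}(X'^2)$. The Euler--Lagrange equation for $\al=1$ should then collapse to a single term $-(\p/\p g^2)\psi_{11}^{2}(X'^2) = -\psi_{11}^{2\prime}(X'^2)\,X'^2_2 = 0$, immediately yielding $X'^2_2=0$ whenever $\psi_{11}^{2}$ is nonconstant. For $\al=2$, the two contributions $\psi_{12}^{1\prime}X'^2_1 + \psi_{12}^{2\prime}X'^2_2$ arising from $\p L_1/\p X'^2$ should exactly cancel the corresponding total $g^i$-derivative terms, leaving $\psi_{11}^{2\prime}(X'^2)(X'^1_2-1)=0$ and hence $X'^1_2=1$.

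For $L_2$, the calculation is more delicate because the multiplier of $\varphi_1^{1}$ (together with the companion multiplier of $\varphi_2^{1}$ implicit in the tabulated $\lambda_{21}^{2}=-X'^1\psi_{21}^{1}/g^1$) depends explicitly on the group parameters $g^1,g^2$ as well as on $X'^2$, so the total $g^i$-derivatives produce additional terms that partially cancel against $\p L_2/\p X'^1$. After the dust settles and the two Lie equations already obtained from $L_1$ are used, I expect the $\al=1$ equation to reduce to $\psi_{21}^{1\prime}(X'^2)\,[-X'^2_1 + (X'^1/g^1)X'^2_2]=0$, forcing $X'^2_1=0$, and the $\al=2$ equation to reduce to $\psi_{21}^{1\prime}(X'^2)\,\varphi_1^{1}=0$, giving the remaining Lie equation $X'^1_1 = X'^1/g^1 - g^2/g^1$.

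The main obstacle is bookkeeping: since the multipliers depend on $X'^2$ and the total $g^i$-derivative of $X'^2$ is $X'^2_i$ (which is constrained to vanish only on-shell), one must track carefully which chain-rule contributions survive and which cancel. One also has to ensure the free functional parameters $\psi_{11}^{2}$ and $\psi_{21}^{1}$ are chosen nondegenerate (nonconstant, with non-vanishing derivatives) so the Lie equations can be extracted as strict rather than merely weak equalities. Once these subtleties are handled, the proof closes by direct computation in complete parallel with the $SO(2)$ case.
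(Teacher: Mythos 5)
Your proposal follows essentially the same route as the paper's own proof: a direct computation of all four Euler--Lagrange expressions, with the same cancellations in the $L_1$ equations, the same inclusion of the companion term $-\frac{1}{g^1}X'^1\psi_{21}^1(X'^2)(X'^1_2-1)$ coming from $\lambda_{21}^2\varphi_2^1$ (which the displayed (\ref{E-L eq2}) omits but the paper's proof uses), and the same final substitution of $X'^2_2=0$ and $X'^1_2=1$ to reduce the raw system to the Lie equations. Your explicit nondegeneracy condition on $\psi_{11}^2$ and $\psi_{21}^1$ (nonvanishing derivatives) makes precise a hypothesis the paper leaves implicit; otherwise the two arguments coincide.
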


\begin{proof}
Calculate
\begin{align*}
\frac{\p L_1}{\p X'^{1}}&=\frac{\p}{\p
X'^{1}}\left[\psi_{11}^2(X'^2)(X'^{1}_2-1)+\psi_{12}^1(X'^2)X'^{2}_1+\psi_{12}^2(X'^2)X'^{2}_2\right]=0\\
\frac{\p}{\p g^{1}}\frac{\p L_1}{\p X'^{1}_1}&=\frac{\p}{\p
g^{1}}0=0\\
\frac{\p}{\p g^{2}}\frac{\p L_1}{\p
X'^{1}_2}&=\frac{\p\psi_{11}^2(X'^2)}{\p
g^{2}}=\frac{\p\psi_{11}^2(X'^2)}{\p X'^{2}}X'^{2}_2
\end{align*}
from which it follows
\[
\frac{\p L_1}{\p X'^{1}}-\sum_{i=1}^{2}\frac{\p}{\p g^{i}}\frac{\p
L_1}{\p X'^{1}_i}=0\quad\Longleftrightarrow\quad
\frac{\p\psi_{11}^2(X'^2)}{\p
X'^{2}}X'^{2}_2=0\quad\Longrightarrow\quad X'^{2}_2=0
\]
Analogously calculate
\begin{align*}
\frac{\p L_1}{\p X'^{2}}&=\frac{\p}{\p
X'^{2}}\left[\psi_{11}^2(X'^2)(X'^{1}_2-1)+\psi_{12}^1(X'^2)X'^{2}_1+\psi_{12}^2(X'^2)X'^{2}_2\right]\\&=
\frac{\p\psi_{11}^2(X'^2)}{\p
X'^{2}}(X'^{1}_2-1)+\frac{\p\psi_{12}^1(X'^2)}{\p
X'^{2}}X'^{2}_1+\frac{\p\psi_{12}^2(X'^2)}{\p
X'^{2}}X'^{2}_2\\
\frac{\p}{\p g^{1}}\frac{\p L_1}{\p
X'^{2}_1}&=\frac{\p\psi_{12}^1(X'^2)}{\p
g^{1}}=\frac{\p\psi_{12}^1(X'^2)}{\p
X'^{2}}X'^{2}_1\\
\frac{\p}{\p g^{2}}\frac{\p L_1}{\p
X'^{2}_2}&=\frac{\p\psi_{12}^2(X'^2)}{\p
g^{2}}=\frac{\p\psi_{12}^2(X'^2)}{\p X'^{2}}X'^{2}_2
\end{align*}
from which it follows
\[
\frac{\p L_1}{\p X'^{2}}-\sum_{i=1}^{2}\frac{\p}{\p g^{i}}\frac{\p
L_1}{\p X'^{2}_i}=0\quad\Longleftrightarrow\quad
\frac{\p\psi_{11}^2(X'^2)}{\p
X'^{2}}(X'^{1}_2-1)=0\quad\Longrightarrow\quad X'^{1}_2-1=0
\]
Now we differentiate the second component of the Lagrangian
$\mathbf{L}$. Calculate
\begin{align*}
\frac{\p L_2}{\p X'^{1}}&=\frac{\p}{\p
X'^{1}}\left[\psi_{21}^1(X'^2)\left(X'^{1}_1-\frac{1}{g^{1}}X'^{1}+\frac{g^{2}}{g^{1}}\right)-
\frac{1}{g^{1}}X'^{1}\psi_{21}^1(X'^2)(X'^{1}_2-1)\right]\\&=
-\frac{1}{g^{1}}\psi_{21}^1(X'^2)-\frac{1}{g^{1}}\psi_{21}^1(X'^2)X'^{1}_2+
\frac{1}{g^{1}}\psi_{21}^1(X'^2)=-\frac{1}{g^{1}}\psi_{21}^1(X'^2)X'^{1}_2\\
\frac{\p}{\p g^{1}}\frac{\p L_2}{\p
X'^{1}_1}&=\frac{\p\psi_{21}^1(X'^2)}{\p
g^{1}}=\frac{\p\psi_{21}^1(X'^2)}{\p
X'^{2}}X'^{2}_1\\
\frac{\p}{\p g^{2}}\frac{\p L_2}{\p X'^{1}_2}&=\frac{\p}{\p
g^{2}}\left(-\frac{1}{g^{1}}X'^{1}\psi_{21}^1(X'^2)\right)=-\frac{1}{g^{1}}\left(\psi_{21}^1(X'^2)X'^{1}_2+X'^{1}\frac{\p\psi_{21}^1(X'^2)}{\p
X'^{2}}X'^{2}_2\right)
\end{align*}
from which it follows
\begin{align*}
\frac{\p L_2}{\p X'^{1}}-\sum_{i=1}^{2}\frac{\p}{\p g^{i}}\frac{\p
L_2}{\p X'^{1}_i}=0\quad&\Longleftrightarrow\quad
\frac{\p\psi_{21}^1(X'^2)}{\p
X'^{2}}\left(X'^{2}_1-\frac{1}{g^{1}}X'^{1}X'^{2}_2\right)=0\\&\Longrightarrow\quad
X'^{2}_1-\frac{1}{g^{1}}X'^{1}X'^{2}_2=0
\end{align*}
Analogously calculate
\begin{align*}
\frac{\p L_2}{\p X'^{2}}&=\frac{\p}{\p
X'^{2}}\left[\psi_{21}^1(X'^2)\left(X'^{1}_1-\frac{1}{g^{1}}X'^{1}+\frac{g^{2}}{g^{1}}\right)-
\frac{1}{g^{1}}X'^{1}\psi_{21}^1(X'^2)(X'^{1}_2-1)\right]\\&=
\frac{\p\psi_{21}^1(X'^2)}{\p
X'^{2}}\left(X'^{1}_1-\frac{1}{g^{1}}X'^{1}+\frac{g^{2}}{g^{1}}\right)-
\frac{1}{g^{1}}\frac{\p\psi_{21}^1(X'^2)}{\p X'^{2}}X'^{1}\left(X'^{1}_2-1\right)\\
\frac{\p}{\p g^{1}}\frac{\p L_2}{\p X'^{2}_1}&=\frac{\p}{\p
g^{1}}0=0\\
\frac{\p}{\p g^{2}}\frac{\p L_2}{\p X'^{2}_2}&=\frac{\p}{\p
g^{2}}0=0
\end{align*}
from which it follows
\begin{align*}
\frac{\p L_2}{\p X'^{2}}-\sum_{i=1}^{2}\frac{\p}{\p g^{i}}\frac{\p
L_2}{\p X'^{2}_i}=0\quad&\Longleftrightarrow\quad
\frac{\p\psi_{21}^1(X'^2)}{\p
X'^{2}}\left(X'^{1}_1-\frac{1}{g^{1}}X'^{1}X'^{1}_2+\frac{g^{2}}{g^{1}}\right)=0\\&\Longrightarrow\quad
X'^{1}_1-\frac{1}{g^{1}}X'^{1}X'^{1}_2+\frac{g^{2}}{g^{1}}=0
\end{align*}
Thus, the Euler-Lagrange equations read
\[
\left\{
  \begin{array}{ll}
    X'^{2}_2=0 \\
    X'^{1}_2-1=0 \\
    X'^{2}_1-\frac{1}{g^{1}}X'^{1}X'^{2}_2=0\\
    X'^{1}_1-\frac{1}{g^{1}}X'^{1}X'^{1}_2+\frac{g^{2}}{g^{1}}=0
  \end{array}
\right.
\]
It can be easily verified, that the latter is equivalent to the
system of the Lie equations.
\end{proof}

\begin{rem}
While the Lagrangian $\mathbf{L}$ contains four arbitrary functions,
particular forms of it can be fixed by taking into account physical considerations.
\end{rem}

\section*{Acknowledgement}

The paper was in part supported by the Estonian Science Foundation, Grant 6912.

\end{document}